\begin{document}

\theoremstyle{plain}
\newtheorem{theorem}{Theorem}
\newtheorem{lemma}[theorem]{Lemma}
\newtheorem{conjecture}[theorem]{Conjecture}
\newtheorem{proposition}[theorem]{Proposition}
\newtheorem{corollary}[theorem]{Corollary}

\theoremstyle{plain}
\newtheorem{definition}{Definition}
\newtheorem{exmp}{Example}
\newtheorem{example}{Example}[section]
\theoremstyle{remark}
\newtheorem*{remark}{Remark}

\title{Constructing Monotones For Quantum Phase References  In Totally Dephasing Channels}
\author{Borzu Toloui}
\email{btoloui@qis.ucalgary.ca}
\affiliation {Institute for Quantum Information Science, University of Calgary, Alberta, T2N 1N4, Canada.}
\author{Gilad Gour}
\affiliation {Institute for Quantum Information Science, University of Calgary, Alberta, T2N 1N4, Canada.}
\affiliation{Department of Mathematics and Statistics, University of Calgary, Alberta T2N 1N4, Canada.}
\author{Barry C. Sanders}
\affiliation{Institute for Quantum Information Science, University of Calgary, Alberta, T2N 1N4, Canada.}
\pacs{03.67.-a, 03.65.Ud}
\begin{abstract}
Restrictions on quantum operations give rise to resource theories.  Total lack of a shared reference frame for transformations associated with a group~$G$ between two parties  is equivalent to having, in effect,  an invariant channel between the parties and a corresponding superselection rule (SSR). The resource associated with the absence of the reference frame is known as `frameness' or `asymmetry'.    
We show that any entanglement monotone for pure bipartite states can be adapted as
a pure-state frameness monotone for phase-invariant channels
(equivalently U(1) superselection rules)
and extended to the case of mixed states via the convex-roof extension.
As an application,
we construct a family of concurrence monotones for U(1)-frameness  for 
general finite-dimensional Hilbert spaces.
Furthermore we study `frameness of formation' for mixed states analogous to entanglement of formation.  
In the case of a qubit, we show that it can be expressed as an analytical function of the concurrence analogously to the Wootters's formula for entanglement of formation.
Our results highlight deep links between entanglement and frameness resource theories.
\end{abstract}

\maketitle
\section{Introduction}
Effective communication is achieved by sending objects through a channel,
with the information encoded into the physical states of the objects.
For example,  information can be encoded into energy levels, rotations, or arrival times of the objects,
but every method of physically transmitting information requires a shared reference frame (RF) between
the parties to the communication.  
For the cases of energy levels, rotations, and arrival times, 
the parties should share the ability to read energies, orientations, and the passage of time in order to have the
ability to decode the encoded information,
and this shared ability is achieved by calibrating their instruments.

In a classical setting, once these devices are calibrated, they are forever reliable, and
the cost per use for the calibration is negligible if the device is used many times.
As quantum mechanics underpins classical mechanics, the quantumness of creating this 
shared RF must be considered by expressing the resource in quantum terms,
and this resource may be consumed because typical quantum measurements at least partially 
demolish the quantum state being measured~\cite{BRS07, BRST09}.
Studies of quantum RFs and their inherent values as resources is the domain of quantum RF theory~\cite{BRS07, BRST09, Enk05, BRS03, SVC04, GS08, GMS09, Mar11}, 
but such studies have focused mainly on pure RF resource states (except for the work in~\cite{GMS09} which focused on the relative entropy of frameness). 

Our aim here is to develop quantum RF measures that cover both pure and mixed states because pure states are an unattainable ideal.  
Many entanglement measures exist for bipartite quantum states,
but these measures are equivalent to each other for the restriction to pure bipartite quantum states.
For mixed states,  these measures can be quite different and should be understood operationally. 
Our theory exploits analogies with entanglement resource theory,      
and we concentrate solely on RF monotones that are defined over pure states and extended  to mixed states by convex-roof extensions.

Our focus will be on the case where parties lack a common reference for phase and are restricted to physical transformations that are invariant under phase changes.   
In particular,  we develop a method to construct different types of frameness monotones by
adapting existing entanglement monotones.          
As an example,  we introduce a concurrence frameness monotone. 
Also we consider a `frameness of formation' monotone   
which quantifies the number of refbits (analogous to ebits)~\cite{Enk05}
that are required to construct the resource state asymptotically
using only allowed operations and show that our frameness of formation is a proper monotone
(non-increasing on average under the set of allowed operations).
Our work creates a foundation for studies of quantum frameness for mixed-state resources 
and points to the analogies between entanglement and frameness that are ripe for exploitation.

%*************************************************************************************************************
\section{Preliminaries}
\label{FrSSR}

In this section, we reprise the basic notions of quantum RF theory, which are necessary in order to generalize from 
pure-state to mixed-state quantum RFs.   
We cast the theory in a quantum communication context in which two parties,
Alice and Bob, collaborate so that Alice can effect a completely-positive map~$\mathcal{E}$
to her state~$\rho\in \mathcal{S} (\mathscr{H})$, for~$\mathscr{H}$ the Hilbert space and~$\mathcal{S}(\mathscr{H})$ the space of  normalized states that act on $\mathscr{H}$.  Also let $\mathcal{P}(\mathscr{H})$ denote the projective Hilbert space of the Hilbert space~$\mathscr{H}$.  For a finite Hilbert space $\mathscr{H}\equiv \mathbb{C}^d$, the projective Hilbert space is the complex projective space~$\mathbb{PC}^{d-1}$. 

Alice and Bob have identical systems,  so their Hilbert spaces are  isomorphic,  
but we assume that Alice lacks the tools to perform mapping~$\mathcal{E}$ and relies on Bob,  who has this capability.
Alice sends the state~$\rho$ to Bob via a (completely-positive trace-preserving) communication channel
$\mathcal{C}:\mathcal{S} (\mathscr{H})\rightarrow \mathcal{S} (\mathscr{H})$,
but unfortunately Alice and Bob lack shared reference frame information.
We shall see that  lacking  shared reference frames can imply a superselection rule (SSR) on
the transformation~$\mathcal{E}$ that Alice, with Bob's collaboration, wishes to implement~\cite{BRS07}.

The specific channel we consider is a random unitary channel (RUC) denoted $\mathcal{U}(g)$
such that $\mathcal{U}(g)[\rho]=U(g)\rho U^\dagger(g)$
for each $U(g)$ an automorphism on~$\mathcal{H}$
and~$g\in G$ for $G$ some group parametrizing all of the possible unitary channels connecting Alice to Bob.  
Where Alice and Bob lack shared reference frame information,  $G$ is the group of transformations between the frames.  
The lack of reference frame information is manifested as a complete ignorance of~$g$ by Alice and Bob;
mathematically this complete ignorance corresponds to a uniform prior distribution for~$g$ over the Haar 
measure for the group~$G$.   

Alice sends the state~$\rho$ to Bob via the channel~$\mathcal{U}(g)$, 
and Bob then effects the mapping~$\mathcal{E}$ and sends the resultant state~$\mathcal{E}\circ \mathcal{U}(g) [\rho]$
back to Alice.    
For~$\mathcal{U}^\dagger(g)[\rho]:=U^\dagger(g)\rho U(g)$,
and given Alice's uniform lack of knowledge of~$g$,
she receives the state
\begin{equation}
\label{eq:twirl1}
	\tilde{\mathcal{E}}_G[\rho]:=\int_G\text{d}\mu(g)\;\mathcal{U}^\dagger(g)\circ\mathcal{E}\circ\mathcal{U}(g)[\rho]
\end{equation}
for~$\text{d}\mu(g)$ the group-invariant (Haar) measure.
As this relation holds for any~$\rho$,
we can write
\begin{equation}
\label{eq:twirl2}
	\tilde{\mathcal{E}}_G:=\int_G\text{d}\mu(g)\;\mathcal{U}^\dagger(g)\circ\mathcal{E}\circ\mathcal{U}(g)
	=:\mathcal{G}\left(\mathcal{E}\right), 
\end{equation}
which is known as `twirling' $\mathcal{E}$,  with~$\mathcal{G}$ being the twirling operation.
Twirling is idempotent: twirling a twirled operator leaves the twirled operator intact. 
This imposes a direct-sum structure on Alice's Hilbert space, which is a SSR~\cite{BRS07}.
We can assume that~$U$ is fully reducible
as such a representation always exists for compact Lie groups and finite Hilbert spaces.

We define a $G$-invariant operation~$\mathcal{E}$
according to $\mathcal{E}\circ\mathcal{U}(g)=\mathcal{U}(g)\circ\mathcal{E}$
for all $g\in G$.
Thus, $\tilde{\mathcal{E}}_G=\mathcal{E}$ if and only if~$\mathcal{E}$ is a $G$-invariant operation.
Therefore the lack of reference information is not an impediment for Alice and Bob to collaborate to effect~$\mathcal{E}$
if~$\mathcal{E}$ is $G$-invariant.

Now we consider the U(1)-invariance case following the approach of Gour and Spekkens~\cite{GS08}.
The abelian group U(1) has a unitary representation $\theta\mapsto\exp(-\imath \theta \hat{n})$
for~$\hat{n}$ the Hermitian generator with spectrum $\{n\in\mathbb{N}\}$.
We refer to~$\hat{n}$ as a number operator.

The Hilbert space can be expressed as the direct sum~$\mathscr{H}=\bigoplus_n\mathscr{H}_n$
with~$n$ an irrep index for U(1) and~$\mathscr{H}_n$ the multiplicity subspaces.  The eigenstates $|n,\beta\rangle$ of the number operator form a  basis for~$\mathscr{H}_n$ where $\beta$ is a multiplicity index.   Operations on multiplicity spaces are unaffected by the U(1)-SSR and as a result
any pure state can be transformed via the U(1)-invariant unitary transformation to a standard form.  Consider the pure state 
\begin{equation} 
 |\psi\rangle=\sum_{n}c_n |\psi_n \rangle 
\end{equation}
where $|\psi_n\rangle \in \mathscr{H}_n$  are normalized states, and let~$\lambda_n:=|c_n|^2$. We apply the Gram-Schmidt process to extend $|\psi_n\rangle$ to  a full orthonormal  basis $\left\{ |\psi_n\rangle\right\}\cup\left\{ |\phi_{n,\beta}\rangle\right\}_{\beta}$ of the subspace $\mathscr{H}_n$.  The unitary transformation 
\begin{align}
U:=\sum_n \left(\frac{c_n^*}{|c_n|} |n,0\rangle\langle\psi_n|+\sum_{\beta\neq0} |n,\beta\rangle\langle\phi_{n,\beta}|\right)
\end{align}
is U(1)-invariant and takes the state $|\psi\rangle$  to the standard form 
\begin{align}
\label{psi}
|\psi\rangle=\sum_n\sqrt{\lambda_n} |n\rangle  
\end{align}
where $|n\rangle$ is shorthand notation for the fixed choice of the state $|n,0\rangle$.    
The spectrum of the state~$|\psi\rangle$  
 is defined to be the set 
 \begin{align}
 \label{Spec}
 \text{spec}(|\psi\rangle):=\{n;\lambda_n>0\}.     
 \end{align}

The~$G$-invariant operator~$\mathcal{E}$ can be expressed as a Kraus operator decomposition
\begin{equation}
	\mathcal{E}[\rho]=\sum_\ell\hat{K}^{(\alpha)}_\ell\rho\hat{K}^{(\alpha)\dagger}_\ell, 
\end{equation}
and for general U(1)-invariant CP maps on standard states,  
Kraus operators must have the form 
\begin{equation}
\label{U(1)Kraus}
	\hat{K}^{(\alpha)}_\ell=\sum_n k^{(\alpha)}_{\ell,n} |n+\ell\rangle\langle n|
\end{equation}
for~$\ell$ an integer and~$k^{(\alpha)}_{\ell,n}\in\mathbb{C}$
such that $\sum_i |k^{(\alpha)}_{\ell,n}|^2 \le 1$ with equality holding if the transformation is trace-preserving~\cite{GS08}.
In this notation, $\ell$ represents the number-shift imposed by the Kraus operator
and~$\alpha$ an index for a particular $\ell$-shifting Kraus decomposition.

States that are not $G$-invariant are resources that Alice or Bob can use to circumvent SSR restrictions,
and ``frameness'' denotes this quantum resource.   
Here we focus on the~U(1)-SSR that corresponds to lacking a common phase,
for example the phase of a laser in homodyne measurements or orientation in a plane.

Note that we express everything with respect to Alice's RF and make a distinction between the preparation procedure by Alice and the consequent transformations of the prepared state performed by Bob,  who has access to the prepared state only through the twirling channel.  Alice can prepare any state  including coherent superpositions that are restricted by the SSR.  However, as all operations afterwards are performed by Bob who does not have access to Alice's RF, the transformations of the state have to be $G$-invariant. Thus,  a coherent superposition, like the state $|\psi\rangle$ in Eq.~(\ref{psi}),  is distinct from the mixture 
\begin{equation}
\rho:= \sum_n \lambda_n |n\rangle\langle n|
\end{equation}
that results from twirling the state. Let us compare the case where Alice prepares the coherent state $|\psi\rangle$ versus the case where she prepares the invariant state $\rho$. Of course,  Bob receives the twirled state $\rho$ in either case and he is free to perform any operation on the state he receives relative to his own RF before sending the state back to Alice. With respect to Alice's RF, however,  the  net result is a U(1)-invariant transformation on a coherent supersposition in the first case and on the twirled mixture in the second case~(See Eqs.~(\ref{eq:twirl1}) and ~(\ref{eq:twirl2})). The two cases  are distinct. For example,  $|\psi\rangle$ can be transformed to $\rho$  while $\rho$ cannot be transformed to $|\psi\rangle$ by U(1)-invariant operations, ~i.\ e.\ by Bob when viewed in Alice's RF.  A state like $|\psi\rangle$ that is not $G$-invariant is a resource, while $G$-invariant states like $\rho$ are not. Alice can accompany a resource state (known to herself and to Bob) with the target state that Bob is supposed to act on and send them together to Bob. This way, the resource state acts as a token of Alice's RF and can be used to partially overcome the SSR-restriction on transformations. In other words, there exist joint $G$-invariant transformations on the two states whose net effect on the target state alone is equivalent to a transformation that is  no longer $G$-invariant~\cite{BRS07, BRST09}. If Alice accompanies a $G$-invariant state instead of a resource with the target state, this is no longer possible. $G$-invariant states are non-resource states.

Frameness monotones are functions that measure the strength of resource states. We now begin with a discussion of frameness monotones in general.

%*************************************************************************************************************
\section{Frameness Monotones} 
\label{mono}

A quantum RF resource should be a monotone in order that the amount of frameness does not
increase under allowed operations.

The motivation for monotones is that they provide operational measures to quantify the strength of  resources.   If one is faced with a certain task and one wants to know what is the maximum probability with which the task can be  performed or the maximum number of particular states  that one can acquire using $G$-invariant operations,  
the definition of that task already involves an optimization over all allowed operations. It shouldn't be possible to do better by first pre-processing the resource because the definition of the task  assumes that all the preprocessing has already been done. In such cases, any measure of the strength of the resource cannot increase under the restricted operations.   Entanglement monotones,  which are attempts to measure entanglement, are functions over states that  are non-increasing under local operations and classical communication (LOCC). Here we consider functions that are non-increasing under $G$-invariant operations~\cite{GS08, GMS09}. 

Monotones can also be used to  determine whether certain states can or cannot be transformed to each other under the SSR-restrictions. A state with a lower value of a frameness monotone cannot be turned into a state with higher value of the monotone using group symmetric transformations. Thus,  monotones under $G$-invariant operations  are also studied in the context of symmetric dynamics. Monotones quantify the asymmetry of quantum states, and provide new conditions beside those specified by Noether's theorems and the related conservation laws to determine the consequences of symmetry for mixed states in closed dynamics as well as pure and mixed states in dissipative and open systems.   Frameness monotones are  known as asymmetry monotones in this context~\cite{Mar11}.

In this section,  we establish a set of reasonable conditions that a valid frameness measure should satisfy,
and we provide insight and background for this choice of conditions. 

The transformation~$\mathcal{E}$ can be decomposed into a set of completely positive operators~$\{\mathcal{E}_x\}$
with $\mathcal{E}=\sum_x\mathcal{E}_x$.
For input state~$\rho$,
the output state is expressed as the unit-trace state
\begin{equation}
\label{eq:sigmax}
	\sigma_x:=\mathcal{E}_x[\rho]/p_x,\;
	p_x:=\text{Tr}\left(\mathcal{E}_x[\rho]\right)
\end{equation}
with~$p_x$ the probability of the~$x^\text{th}$ outcome.
The ensemble of outcomes is written as $\left\{ \sigma_x, p_x\right\}$.

\begin{definition}
\label{def:ensemble}
A function $F: \mathcal{S} \left(\mathscr{H}\right)\to\mathbb{R}^{+}$ is an \emph{ensemble frameness monotone}
if it satisfies
\begin{enumerate}
\item [F1.]  $F(\rho)=0$ for any $G$-invariant state~$\rho=\mathcal{G} (\rho)$;
\item [F2.] $F(\rho)\geq\sum_x p_x F(\sigma_x)$, for each~$\mathcal{E}_x$ being $G$-invariant;
\item [F3.] $F$ is convex: for any ensemble~$\left\{ \sigma_i, p_i\right\}$,
$\sum_i p_i F(\sigma_i)\ge F\left(\sum_i p_i  \sigma_i\right)$.
\end{enumerate}
\end{definition}

Conditions F1-F3 are analogous  to Vidal's criteria for entanglement monotones~\cite{Vid00}.
In general Condition~F3 may not be necessary as logarithmic negativity (which provides an upper bound for distillable entanglement) is a useful measure of entanglement although not convex~\cite{Ple05}. 
However, here we restrict to convex measures and therefore require all three Conditions F1-F3.    

 We can also define pure-state frameness monotones in a similar manner.  Pure-state frameness monotones are functions that behave monotonically under the more restricted set of $G$-invariant operations that map pure states to pure states only.

\begin{definition}
\label{def:ensemble-pure}
A function 
\begin{equation}
F_\text{pure}: \mathcal{P} (\mathscr{H}) \rightarrow \mathbb{R}^+ :|\psi\rangle\langle\psi|\mapsto F_\text{pure}\left(|\psi\rangle\langle\psi|\right)\nonumber
\end{equation}
is an ensemble pure-state frameness monotone if it satisfies 
\begin{enumerate}
\item [F1.]  $F_\text{pure}(|\psi\rangle\langle\psi|)=0$ for any $G$-invariant state~$|\psi\rangle\langle\psi|=\mathcal{G} (|\psi\rangle\langle\psi|)$;
\item [F2.] $F_\text{ pure}(|\psi\rangle\langle\psi|)\geq\sum_x p_x F_\text{pure} (|\phi_x\rangle\langle\phi_x|)$,  for $G$-invariant transformations ~$\mathcal{E}_x$,  such that 
$
|\phi_x\rangle\langle\phi_x|:=\mathcal{E}_x[|\psi\rangle\langle\psi|]/p_x,\;
	p_x:=\text{Tr}\left(\mathcal{E}_x[|\psi\rangle\langle\psi|]\right). 
$
\end{enumerate}
Here  $\mathcal{P}(\mathscr{H})$ denote the projective Hilbert space of the Hilbert space~$\mathscr{H}$. 
\end{definition}

If an ensemble frameness monotone already exists for  pure states, one way to extend the pure-state monotone to a  measure defined for all states~$\rho$ is according to the following definition:
\begin{definition}
 Given a pure-state frameness monotone
\begin{equation}
	F_\text{pure}:  \mathcal{P} (\mathscr{H})\rightarrow \mathbb{R}^+:|\psi\rangle\langle\psi|\mapsto F_\text{pure}\left(|\psi\rangle\langle\psi|\right), \nonumber
\end{equation}
 the convex-roof extension $F$ is defined by 
\begin{align}
\label{eq:convexroof}
	F:\mathcal{S}(\mathscr{H})&\rightarrow \mathbb{R}^+:\\ \nonumber \rho&\mapsto F(\rho)\equiv \min_{\left\{|\psi_i\rangle, p_i\right\}} \sum_i p_i \; F_\text{pure} \left(|\psi_i\rangle\langle\psi_i|\right),
\end{align}
with the minimum taken over all possible pure-state decompositions of  $\rho=\sum_ip_i|\psi_i\rangle\langle\psi_i|$.    
\end{definition}
For pure states (i.\ e.\ ~rank-one density operators) the two monotones are of course always equal:
\begin{align}
F(|\psi\rangle\langle\psi|)\equiv F_\text{pure}(|\psi\rangle\langle\psi|), \;\;\; \forall |\psi\rangle \in \mathscr{H}. 
\end{align} 

As we presently show the convex-roof extension of a pure-state ensemble monotone is an ensemble monotone for all states based on Definition~\ref{def:ensemble}.    
To see this, consider the following two lemmas that follow directly from the definition of the convex-roof extension.  

\begin{lemma}
\label{lemma:lm1}
The convex-roof extension of a pure-state frameness monotone is a convex function. 
\end{lemma}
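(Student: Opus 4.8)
The plan is to verify the defining inequality of convexity directly: for any two states $\rho_1,\rho_2\in\mathcal{S}(\mathscr{H})$ and any $p\in[0,1]$, I want to show that the convex-roof extension satisfies $F(p\rho_1+(1-p)\rho_2)\le pF(\rho_1)+(1-p)F(\rho_2)$. The essential observation is that $F$ is defined in Eq.~(\ref{eq:convexroof}) as a \emph{minimization} over pure-state decompositions, so exhibiting any single decomposition of the mixture $\rho:=p\rho_1+(1-p)\rho_2$ already furnishes an upper bound on $F(\rho)$. I would therefore build an explicit (and generally suboptimal) decomposition of $\rho$ by concatenating optimal decompositions of the two summands, and then read off the bound.

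Concretely, I would first fix optimal pure-state decompositions $\rho_1=\sum_i q_i|\psi_i\rangle\langle\psi_i|$ and $\rho_2=\sum_j r_j|\phi_j\rangle\langle\phi_j|$ attaining $F(\rho_1)=\sum_i q_i F_\text{pure}(|\psi_i\rangle\langle\psi_i|)$ and $F(\rho_2)=\sum_j r_j F_\text{pure}(|\phi_j\rangle\langle\phi_j|)$. The weighted union of these two ensembles, with weights $pq_i$ on the $|\psi_i\rangle$ and $(1-p)r_j$ on the $|\phi_j\rangle$, is then a legitimate pure-state decomposition of $\rho$: the combined weights are non-negative and sum to $p+(1-p)=1$, and by linearity they reproduce $\rho$. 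Since $F(\rho)$ is the minimum over all such decompositions, it is bounded above by the value assigned to this particular one, which after separating the two groups of terms equals precisely $pF(\rho_1)+(1-p)F(\rho_2)$. This is the convexity inequality, so condition of Lemma~\ref{lemma:lm1} follows.

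The only point requiring care, and the step I expect to be the main (indeed the sole) obstacle, is justifying that the minimum in Eq.~(\ref{eq:convexroof}) is genuinely attained, so that the optimal decompositions invoked above actually exist. For the finite-dimensional Hilbert spaces of interest this follows from compactness: the unitary freedom relating the pure-state decompositions of a fixed $\rho$, together with the bound on the number of required pure states supplied by Carath\'eodory's theorem, shows that the set of admissible decompositions is compact, while $F_\text{pure}$ is continuous on $\mathcal{P}(\mathscr{H})$ for the monotones we construct, so the infimum is achieved. Should one instead prefer to argue with infima throughout, the same concatenation construction goes through verbatim after an elementary $\epsilon$-approximation of each near-optimal decomposition, and the $\epsilon$ is then sent to zero.
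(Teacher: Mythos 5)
Your proof is correct and follows essentially the same route as the paper's: you concatenate the optimal decompositions of the summands into a (generally suboptimal) pure-state decomposition of the mixture and invoke the minimization defining the convex roof to get the upper bound $F(\rho)\le pF(\rho_1)+(1-p)F(\rho_2)$. Your extra care about whether the minimum in Eq.~(\ref{eq:convexroof}) is attained (compactness/Carath\'eodory, or the $\epsilon$-approximation fallback) is a refinement the paper silently omits, but it does not alter the argument.
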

\begin{proof}
Let   $\rho=\sum_i p_i\rho_i$, and let $\rho_i=\sum_j p_{ij} |\psi_{ij}\rangle \langle \psi_{ij}|$ be the optimal decomposition of $\rho_i$ in the sense of Eq.~(\ref{eq:convexroof}).
The minimum average frameness of~$\rho$ is reached either by the sum 
\begin{equation}
	F(\rho)=\sum_{i,j} p_i p_{ij} F_\text{pure}\left(|\psi_{ij}\rangle\langle\psi_{ij}|\right)=\sum_i p_i F(\rho_i)
\end{equation}
or by some other ensemble~$\left\{|\phi^{(\alpha)}_\ell\rangle, q_\ell \right\}$ forming~$\rho$, in which case
\begin{equation}
	F(\rho)= \sum_\ell q_\ell F_\text{pure}\left(|\phi_\ell\rangle\langle\phi_\ell|\right)< \sum_i p_i F(\rho_i)
\end{equation}  
so that in general $F(\rho)\le \sum_i p_i F(\rho_i)$.
\end{proof}
%*********************
\begin{lemma}
\label{lemma:lm2}
If 
\begin{equation}
F_\text{pure}:  \mathcal{P} (\mathscr{H}) \rightarrow \mathbb{R}^+ :|\psi\rangle\langle\psi|\mapsto F_\text{pure}\left(|\psi\rangle\langle\psi|\right)  \nonumber  
\end{equation}
  does not increase on average under $G$-invariant transformations  
  between pure states  (i.\ e.\  $F_\text{pure}$ is a pure-state ensemble monotone) 
then the convex-roof extension defined by Eq.~(\ref{eq:convexroof}) is an ensemble frameness monotone. 
\end{lemma}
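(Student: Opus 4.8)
The plan is to verify that the convex-roof extension $F$ satisfies the three conditions F1--F3 of Definition~\ref{def:ensemble}. Condition~F3 (convexity) is already supplied by Lemma~\ref{lemma:lm1}, so only F1 and F2 remain. For F1, I would use that a $G$-invariant state $\rho=\mathcal{G}(\rho)$ commutes with every $U(g)$ and is therefore block-diagonal across the number sectors, $\rho=\bigoplus_n\rho_n$ with each $\rho_n$ supported on $\mathscr{H}_n$. Diagonalising each block produces eigenvectors lying entirely within a single $\mathscr{H}_n$; since U(1) acts on such a vector only through a global phase, each is a $G$-invariant pure state and hence carries $F_\text{pure}=0$ by the pure-state F1. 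This exhibits one pure-state decomposition of $\rho$ with zero average frameness, and because the convex roof is a minimum over decompositions while $F_\text{pure}\ge 0$, we conclude $F(\rho)=0$.

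The substance of the lemma is F2. First I would fix an \emph{optimal} pure-state decomposition $\rho=\sum_i q_i|\psi_i\rangle\langle\psi_i|$ realising the convex roof, so that $F(\rho)=\sum_i q_i F_\text{pure}(|\psi_i\rangle\langle\psi_i|)$. The obstacle is that a $G$-invariant CP operator $\mathcal{E}_x$ may carry several Kraus operators, so $\mathcal{E}_x[|\psi_i\rangle\langle\psi_i|]$ need not be pure and the pure-state condition of Definition~\ref{def:ensemble-pure} cannot be applied to $\mathcal{E}_x$ as it stands. To circumvent this I would refine each $\mathcal{E}_x$ into its individual Kraus terms $\mathcal{K}_{x,\ell}[\cdot]=\hat{K}_{x,\ell}(\cdot)\hat{K}_{x,\ell}^\dagger$. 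The key observation is that, by Eq.~(\ref{U(1)Kraus}), every admissible Kraus operator carries a definite number-shift $\ell$, whence $U^\dagger(\theta)\hat{K}_{x,\ell}U(\theta)=e^{\imath\theta\ell}\hat{K}_{x,\ell}$ and the conjugating phases cancel in $\mathcal{U}^\dagger(\theta)\circ\mathcal{K}_{x,\ell}\circ\mathcal{U}(\theta)$. Hence each single-Kraus map $\mathcal{K}_{x,\ell}$ is itself $G$-invariant and, being rank-one, maps pure states to pure states, so that $\sum_{x,\ell}\mathcal{K}_{x,\ell}=\mathcal{E}$ is a legitimate $G$-invariant pure-to-pure refinement.

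With this refinement, applying $\{\mathcal{K}_{x,\ell}\}$ to $|\psi_i\rangle$ yields pure outcomes $|\hat\phi_{x,\ell,i}\rangle\propto\hat{K}_{x,\ell}|\psi_i\rangle$ with probabilities $p_{x,\ell|i}=\|\hat{K}_{x,\ell}|\psi_i\rangle\|^2$. Collecting these across $i$ and $\ell$ exhibits one particular pure-state decomposition of each $\sigma_x$, so the convex-roof minimum gives the upper bound $p_x F(\sigma_x)\le\sum_{i,\ell}q_i\,p_{x,\ell|i}\,F_\text{pure}(|\hat\phi_{x,\ell,i}\rangle\langle\hat\phi_{x,\ell,i}|)$. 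Summing over $x$ and regrouping by $i$, the inner sum $\sum_{x,\ell}p_{x,\ell|i}F_\text{pure}(|\hat\phi_{x,\ell,i}\rangle\langle\hat\phi_{x,\ell,i}|)$ is precisely the average output frameness of the $G$-invariant pure-to-pure operation acting on $|\psi_i\rangle$, and is therefore bounded above by $F_\text{pure}(|\psi_i\rangle\langle\psi_i|)$ by the pure-state monotonicity hypothesis. Chaining the inequalities yields $\sum_x p_x F(\sigma_x)\le\sum_i q_i F_\text{pure}(|\psi_i\rangle\langle\psi_i|)=F(\rho)$, which is F2. I expect the refinement step --- recognising that the definite-shift structure of the U(1)-invariant Kraus operators renders each rank-one term individually $G$-invariant --- to be the crux; once it is in place, the two appeals to the convex-roof minimum and to the pure-state hypothesis are routine.
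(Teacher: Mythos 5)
Your proof is correct and follows essentially the same route as the paper's: fix the optimal decomposition of $\rho$, refine the $G$-invariant operation into individual Kraus terms acting pure-to-pure, bound $F(\sigma_x)$ by the resulting pure-state decomposition via the convex-roof minimum, apply the pure-state monotonicity hypothesis term by term, and chain the inequalities. The only differences are that you are somewhat more thorough than the paper --- you verify condition F1 explicitly and you justify (via the definite number-shift structure of Eq.~(\ref{U(1)Kraus})) why each single-Kraus map is itself $G$-invariant, a fact the paper simply asserts --- but these are refinements of detail, not a different argument.
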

\begin{proof}

We need to show that,
for any~$\rho$ and~$G$-invariant operation $\mathcal{E}=\sum_x\mathcal{E}_x$,
we have~$F(\rho) \ge \sum_xp_xF(\sigma_x)$, where
\begin{equation}
\label{eq:sigmax2}
	\sigma_x:=\mathcal{E}_x[\rho]/p_x,\;
	p_x:=\text{Tr}\left(\mathcal{E}_x[\rho]\right)
\end{equation}
with~$p_x$ the probability of the~$x^\text{th}$ outcome.
The ensemble of outcomes is written as $\left\{ \sigma_x, p_x\right\}$.
 Assume~$\left\{|\psi_i\rangle, q_i\right\}$    
is the optimal decomposition of~$\rho$ in the sense that
$F(\rho)=\sum_iq_iF_\text{pure}(|\psi_i\rangle\langle\psi_i|)$.   Let $\hat{K}^{(\alpha)}_{x,\ell}$ be a choice of Kraus operators for $\mathcal{E}_x$.   
Each $\hat{K}^{(\alpha)}_{x,\ell}$ effects the mapping
\begin{equation}
		 |\psi_i\rangle\mapsto |\phi^{(\alpha)}_{x,i,\ell}\rangle
		:= \frac{\hat{K}^{(\alpha)}_{x,\ell}}{\sqrt{q^{(\alpha)}_{x,i,\ell}}}\left|\psi_i\right\rangle
\end{equation}   
with  probability  
$q^{(\alpha)}_{x,i,\ell}=\|\hat{K}^{(\alpha)}_{x,\ell} |\psi_i\rangle\|^2$. 
Thus,
\begin{equation}
	\sigma_x=\frac{1}{p_x}\sum_{i.\ell,\alpha}q_iq^{(\alpha)}_{x,i,\ell} |\phi^{(\alpha)}_{x,i,\ell}\rangle\langle \phi^{(\alpha)}_{x,i,\ell}|.
\end{equation}

The convex-roof extension is a convex function (Lemma~\ref{lemma:lm1})  so  
\begin{align}
	F(\sigma_x)&\le \frac{1}{p_x}\sum_{i.\ell,\alpha}q_iq^{(\alpha)}_{x,i,\ell}
	F(|\phi^{(\alpha)}_{x,i,\ell}\rangle\langle\phi^{(\alpha)}_{x,i,\ell}|)\nonumber\\
	 &=\frac{1}{p_x}\sum_{i.\ell,\alpha}q_iq^{(\alpha)}_{x,i,\ell} 
	F_\text{pure}(|\phi^{(\alpha)}_{x,i,\ell}\rangle\langle\phi^{(\alpha)}_{x,i,\ell}|)
\end{align}
The operators $\hat{K}^{(\alpha)}_{x,\ell}$ are themselves $G$-invariant, and as we have assumed that $F_\text{ pure}$ is an ensemble monotone on pure states,
\begin{equation}
	         F_\text{pure}\left(|\psi_i\rangle\langle \psi_i|\right)\ge \sum_{x,\ell,\alpha} q^{(\alpha)}_{x,i,\ell}
		F_\text{pure}\left(|\phi^{(\alpha)}_{x,i,\ell}\rangle\langle\phi^{(\alpha)}_{x,i,\ell}|\right)
\end{equation}
readily follows.
Putting everything together, we obtain 
\begin{align}
	F(\rho)
		=&\sum_i q_i \; F_\text{pure}\left(|\psi_i\rangle\langle \psi_i|\right)\nonumber\\
		\ge& \sum_{i,x,\ell,\alpha} q_i q^{(\alpha)}_{x,i,\ell}
			F_\text{pure}\left(|\phi^{(\alpha)}_{x,i,\ell}\rangle\langle\phi^{(\alpha)}_{x,i,\ell}|\right)\nonumber\\
		\ge&\sum_xp_x F(\sigma_x)
\end{align}
so the convex-roof extension is indeed an ensemble monotone.
\end{proof}

In other words, if a pure-state frameness measure is an ensemble monotone on pure states,
then the convex-roof extension is an ensemble monotone under all allowed CP-maps.
Therefore, we need only consider how a function behaves on pure-state to pure-state transformations
in order to determine if it is an ensemble monotone.   
Also note that we do not interpret the convex-roof extension in terms of cost of forming the state at this stage.   
Rather, we treat the convex-roof extension only as an ensemble frameness monotone under $G$-invariant transformations.

In the next section we present the main result of our paper. The main result of the paper consists of a general method to construct pure-state frameness monotones for phase-invariant channels that we then extend  to frameness monotones for all states via the convex-roof extension.  We then narrow our study to  specific measures that quantify the frameness of bounded-size 
quantum RFs for phase.

%*************************************************************************************************************
\section{Qudit monotones for U(1)-SSR}
\label{qudit-mono}  
We are now in a position to construct useful frameness monotones for $d$-dimensional qudits.
Although we discuss quantum RFs for phase, 
our results apply to abelian symmetry groups in general.         

We bring all states into the standard form without multiplicities:  $|\psi\rangle=\sum_n\sqrt{\lambda_n}|n\rangle$.   
Let $n_\text{min}(|\psi\rangle)$ and~$n_\text{max}(|\psi\rangle)$ denote the minimum and maximum 
values of~$n$ in the number spectrum defined in Eq.~(\ref{Spec}),  with the restriction
\begin{equation}
\label{eq:nrestriction}
	n_\text{max}(|\psi\rangle)-n_\text{min}(|\psi\rangle)\le d.
\end{equation}
Consider a U(1)-invariant transformation  $\hat{K}^{(\alpha)}_\ell$ of Eq.~(\ref{U(1)Kraus}) that maps
\begin{equation}
	|\psi\rangle\mapsto|\phi^{(\alpha)}_\ell\rangle \equiv \frac{\hat{K}^{(\alpha)}_\ell}{\sqrt{p^{(\alpha)}_\ell}}\left|\psi\right\rangle
\end{equation}
with probability $p^{(\alpha)}_\ell=\|\hat{K}^{(\alpha)}_\ell |\psi\rangle\|^2$, where $\hat{K}^{(\alpha)}_\ell$ effects the mapping
\begin{equation}
	 |\psi\rangle\mapsto\ \sum_n \sqrt{\lambda_n} k^{(\alpha)}_{\ell,n} |n+\ell\rangle. 
\end{equation} 
Let us define
\begin{equation}
\label{eq:rhopsi}
	\rho_\psi:=\mathcal{G}\left(|\psi\rangle\langle\psi|\right)
\end{equation}
and  purify $\rho_\psi$ by adding an auxiliary reference system~R to obtain 
\begin{equation}
\label{Psitilde}	|\tilde{\psi}\rangle
		\equiv\sum_n \sqrt{\lambda_n} |n \rangle_\text{S}
		\otimes\left|n_\text{max}(|\psi\rangle)-n\right\rangle_\text{R} \in\mathscr{H}_\text{S}\otimes\mathscr{H}_\text{R}
\end{equation} 
with~S signifying the original system.
If we follow the same procedure for the state~$|\phi^{(\alpha)}_\ell\rangle$,
noting that  $n_{\max}(|\phi^{(\alpha)}_\ell\rangle)=n_\text{max}(|\psi\rangle)+\ell$, we use 
\begin{align}
\label{Phitilde}
	|\tilde{\phi}^{(\alpha)}_\ell\rangle
		:=&\frac{1}{\sqrt{p^{(\alpha)}_\ell}}\sum_n \sqrt{\lambda_n} k^{(\alpha)}_{\ell,n} |n
			+\ell \rangle_\text{S}\nonumber\\&
			\otimes|n_{\max}(|\phi^{(\alpha)}_\ell\rangle)-(n+\ell)\rangle_\text{R} \\
		=&\frac{1}{\sqrt{p^{(\alpha)}_\ell}}\sum_n \sqrt{\lambda_n} k^{(\alpha)}_{\ell,n} |n
			+\ell \rangle_\text{S} \otimes |n_\text{max}(|\psi\rangle)-n\rangle_\text{R}.\nonumber
\end{align} 
Evidently,  $|\tilde{\psi}\rangle$ can be transformed to $|\tilde{\phi}^{(\alpha)}_\ell\rangle$ via the local transformation $\hat{K}^{(\alpha)}_\ell\otimes \openone_\text{R}$, where~$\openone_\text{R}$ is the identity operator on~R.

The states $|\tilde{\psi}\rangle$ and $|\tilde{\phi}^{(\alpha)}_\ell\rangle$
are dependent in the sense that one state can map to the other via operations acting only on the system~S,
i.e.\ local operations.  
The fact that the two purifications in Eqs.~(\ref{Psitilde}) and~(\ref{Phitilde}) can be linked together by a local operation is due to the SSR restriction on the operations $\hat{K}^{(\alpha)}_\ell$. To see this, suppose  the restriction was lifted to allow a number state~$|n\rangle$ to transform to a superposition of two number states
$|n_1\rangle+|n_2\rangle$ with $n_2> n_1$.
The purification process in~(\ref{Psitilde}) maps the outcome superposition to an entangled state
\begin{equation}
	|\tilde{\phi}^{(\alpha)}_\ell\rangle\propto \sqrt{\eta_1}|n_1\rangle_\text{S} \otimes |n_2
		-n_1\rangle_\text{R}+\sqrt{\eta_2} |n_2\rangle_\text{S}\otimes |0\rangle_\text{R}, 
\end{equation}
 whereas the purified version of the initial state, $|\tilde{\psi}\rangle= |n\rangle_\text{S}\otimes|0\rangle_\text{R}$, 
is separable, and no local operation can make it entangled. 

Our particular choice of auxiliary states in the purification process ensures that purified states always remain within a superselected block of some fixed total number as for Eq.~(\ref{Psitilde}).
Thus,  we can see that the  operators  $\hat{K}^{(\alpha)}_\ell$ in~(\ref{U(1)Kraus})  are precisely those Kraus operators that act on system S alone and, at the same time,  either keep the joint state of the two systems  S and R within the multiplicity space of total number  $n_\text{max}(|\psi\rangle)$ or transfer them both to the multiplicity space
of another total number $n_\text{max}(|\psi\rangle)+\ell$ for some~$\ell\ge -n_\text{max}(|\psi\rangle)$. System R acts as a sort of quantum phase reference in the sense that it enables system S to break the SSR locally while preserving the overall SSR. The partial trace that results from lack of access to the reference system R  is equivalent to the twirling map on the initial unipartite state. 

All bipartite pure  state entanglement monotones can be expressed as concave functions of the Schmidt coefficients of the states, or equivalently, the eigenvalues of the reduced density matrix~\cite{Vid00}. 
The reduced density matrix of the state  $|\tilde{\psi}\rangle$ is the same as $\rho_\psi$ defined in Eq.~(\ref{eq:rhopsi}).
Thus, from any entanglement monotone function defined  for states acting on~$\mathscr{H}_\text{S} \otimes \mathscr{H}_\text{R}$,  we can build a monotone under U(1)-SSR for states acting on~$\mathscr{H}$ by replacing the partial trace with the twirling map.     
We formalize this result in the following proposition:     
\begin{proposition}
\label{Vidalf}
Suppose a function $f:  \mathcal{S} (\mathscr{H}) \rightarrow \mathbb{R}^+$ satisfies the following two conditions.
\begin{enumerate}
\item [E1.] Unitary invariance:
	$f(\rho)=f(U \rho \;U^{\dagger})\:\forall U\in \mathcal{S} (\mathscr{H})$.
\item [E2.] Concavity: $f(t \rho_1+[1-t]\rho_2)\ge t f(\rho_1)+[1-t]f(\rho_2)\;\forall\;t \in [0,1]$.
\end{enumerate} 
Then
\begin{equation}
\label{eq:FfG}
	F_\text{pure}: \mathcal{P} \left(\mathscr{H}\right)\rightarrow\mathbb{R}^+:
		\left|\psi\right\rangle\left\langle\psi\right|\mapsto f(\mathcal{G}\left(|\psi\rangle\langle \psi|\right)),\nonumber
\end{equation}    
is a pure-state ensemble monotone under U(1)-invariant operations, and its convex-roof extension 
\begin{equation}
F: \mathcal{S}\left(\mathscr{H}\right) \rightarrow \mathbb{R}^+,
\end{equation} 
is an ensemble monotone for all states.  
\end{proposition}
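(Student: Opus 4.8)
The plan is to recognize $F_\text{pure}$ as an entanglement monotone of the purified state and then reduce the claim to Vidal's characterization~\cite{Vid00}. The key observation is that $F_\text{pure}(|\psi\rangle\langle\psi|)=f(\rho_\psi)$ with $\rho_\psi=\mathcal{G}(|\psi\rangle\langle\psi|)$ as in Eq.~(\ref{eq:rhopsi}), and that $\rho_\psi$ is precisely the system-S marginal $\text{Tr}_\text{R}|\tilde{\psi}\rangle\langle\tilde{\psi}|$ of the purification in Eq.~(\ref{Psitilde}). Hence $F_\text{pure}(|\psi\rangle\langle\psi|)$ coincides with $f$ evaluated on the reduced density matrix of the bipartite pure state $|\tilde{\psi}\rangle$, i.e.\ with a candidate pure-state entanglement measure built from $f$.

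First I would establish condition~F2. By Vidal's theorem~\cite{Vid00}, any function of the form $|\Psi\rangle\mapsto f(\text{Tr}_\text{R}|\Psi\rangle\langle\Psi|)$ with $f$ obeying unitary invariance~(E1) and concavity~(E2) is a pure-state entanglement monotone, hence non-increasing on average under local operations on S. The link, already set up in Eqs.~(\ref{Psitilde}) and~(\ref{Phitilde}), is that each U(1)-invariant Kraus operator $\hat{K}^{(\alpha)}_\ell$ on S lifts to the local operator $\hat{K}^{(\alpha)}_\ell\otimes\openone_\text{R}$ carrying $|\tilde{\psi}\rangle$ to $|\tilde{\phi}^{(\alpha)}_\ell\rangle$, and that appending the orthonormal reference register leaves norms unchanged, so the branching probabilities $p^{(\alpha)}_\ell=\|\hat{K}^{(\alpha)}_\ell|\psi\rangle\|^2$ are identical in both pictures. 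Applying the same purification to each outcome $|\phi^{(\alpha)}_\ell\rangle$ gives $\text{Tr}_\text{R}|\tilde{\phi}^{(\alpha)}_\ell\rangle\langle\tilde{\phi}^{(\alpha)}_\ell|=\mathcal{G}(|\phi^{(\alpha)}_\ell\rangle\langle\phi^{(\alpha)}_\ell|)$, so $f$ of that marginal equals $F_\text{pure}(|\phi^{(\alpha)}_\ell\rangle\langle\phi^{(\alpha)}_\ell|)$. Chaining the entanglement-monotone inequality through these identifications yields $F_\text{pure}(|\psi\rangle\langle\psi|)\ge\sum_x p_x F_\text{pure}(|\phi_x\rangle\langle\phi_x|)$, which is exactly~F2.

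Next I would verify~F1 and then conclude. A $G$-invariant pure state satisfies $\mathcal{G}(|\psi\rangle\langle\psi|)=|\psi\rangle\langle\psi|$, which forces $|\psi\rangle$ into a single number sector, so its standard form reduces to one $|n\rangle$ and $\rho_\psi$ is rank one. In the purified picture this is a product state $|\tilde{\psi}\rangle=|n\rangle_\text{S}\otimes|n_\text{max}-n\rangle_\text{R}$ carrying no entanglement; since an entanglement monotone vanishes on product states (equivalently, $f$ carries the standard normalization of vanishing on pure, rank-one reduced states), we get $F_\text{pure}(|\psi\rangle\langle\psi|)=0$. With F1 and F2 established, $F_\text{pure}$ is a pure-state ensemble monotone in the sense of Definition~\ref{def:ensemble-pure}, and Lemma~\ref{lemma:lm2} immediately upgrades its convex-roof extension to an ensemble monotone $F$ for all states, completing the proposition.

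I expect the main obstacle to be the faithful transfer of the \emph{on-average} inequality across the purification rather than any isolated algebraic step. One must confirm that the lifted operators $\hat{K}^{(\alpha)}_\ell\otimes\openone_\text{R}$ form a legitimate one-sided local protocol on S with outcome probabilities matching the U(1)-channel, since Vidal's monotonicity is phrased for such operations. The purification chosen in Eq.~(\ref{Psitilde})---keeping both systems inside a fixed total-number block and letting R decouple passively---is exactly what secures this correspondence, and checking that this block structure persists under every allowed $\ell$-shift is the delicate part of the argument.
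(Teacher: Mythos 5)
Your proposal is correct and takes essentially the same route as the paper: the same purification of Eqs.~(\ref{Psitilde})--(\ref{Phitilde}), the same lifting of U(1)-invariant Kraus operators to one-sided local operators $\hat{K}^{(\alpha)}_\ell\otimes\openone_\text{R}$ with matching branching probabilities, the same identification $\text{Tr}_\text{R}|\tilde{\psi}\rangle\langle\tilde{\psi}|=\mathcal{G}(|\psi\rangle\langle\psi|)$, and the same final appeal to Lemma~\ref{lemma:lm2} for the convex-roof extension. The only difference is presentational: you invoke Vidal's monotonicity theorem as a black box where the paper re-derives that inequality inline (the R marginal is preserved on average, then concavity E2 and unitary invariance E1 give $f(\tau)\ge\sum_x p_x f(\tau_x)$), and you additionally verify condition F1 (with the implicit normalization that $f$ vanishes on rank-one states), a point the paper leaves unaddressed.
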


\begin{proof}
We first prove that $F_\text{pure}$ is a pure-state ensemble monotone. 
Let $\mathcal{E}=\sum_x\mathcal{E}_x$ for~$\mathcal{E}_x$ being U(1)-invariant completely positive operators that map pure states to pure states.    
Given a pure state~$|\psi\rangle$, we follow the notation of Definition~\ref{def:ensemble-pure}: 
\begin{equation}
	|\phi_x\rangle\langle \phi_x|:=\frac{1}{p_x}\mathcal{E}_x\left[|\psi\rangle\langle\psi|\right],\;
	p_x:=\text{Tr}\left(\mathcal{E}_x\left[|\psi\rangle\langle\psi|\right]\right).
\end{equation}
Let the corresponding U(1)-invariant Kraus operators be indexed as 
\begin{equation}
	\hat{K}_{x,j} |\psi\rangle = \sqrt{p_{x,j}} |\phi_{x}\rangle, 
\end{equation}
where $\sum_j p_{x,j}\equiv p_x$.  
The purifications of $\rho_\psi$ and~$\rho_{\phi_{x}}$ according to Eqs.~(\ref{Psitilde}) and~(\ref{Phitilde}) are then related to each other by
\begin{equation}
 	\hat{K}_{x,j}\otimes \openone_R |\tilde{\psi}\rangle = \sqrt{p_{x,j}} |\tilde{\phi}_{x}\rangle.
\end{equation}

The reduced density operator of the auxiliary system~$R$ does not change under the transformation.    
For   
\begin{equation}
	\tau:=\text{Tr}_\text{S}\left(|\tilde{\psi}\rangle\langle\tilde{\psi}|\right),\;
	\tau_{x}\equiv\text{Tr}_\text{S}\left(|\tilde{\phi}_{x}\rangle\langle\tilde{\phi}_{x}|\right),
\end{equation}
we obtain
\begin{equation}
	\tau=\sum_{x,j} p_{x,j}\tau_{x}=\sum_{x} p_{x} \tau_{x}.
\end{equation}
Condition~E1 (unitary invariance) ensures that~$f$ is a function only of the state's eigenvalues,
and concavity ensures that 
\begin{equation}
\label{Reduced}
	f(\tau)=f\left(\sum_{x}p_{x}\tau_{x}\right)\ge\sum_{x}p_x f\left(\tau_{x}\right).
\end{equation}
On the other hand
\begin{align}
\label{Traces}
	f\left(\text{Tr}_\text{S}\left[|\tilde{\psi}\rangle\langle\tilde{\psi}|\right]\right)
		=&f\left(\text{Tr}_\text{R}\left[|\tilde{\psi}\rangle\langle\tilde{\psi}|\right]\right)\nonumber\\
		=&f\left(\mathcal{G}\left(|\psi\rangle\langle\psi|\right)\right)
		=F_\text{pure}\left(|\psi\rangle\langle\psi|\right)
\end{align}
and similarly for $\left\{|\phi_{x}\rangle\right\}$.
Finally,  Eqs.~(\ref{Reduced}) and~(\ref{Traces}) together imply
\begin{align}
	F_\text{pure}\left(\left|\psi\right\rangle\left\langle\psi\right|\right)\ge&\sum_{x} p_x
		F_\text{pure}\left(\left|\phi_{x}\right\rangle\left\langle\phi_{x}\right|\right),   
\end{align}
which is the desired result.  
Lemma~\ref{lemma:lm2} ensures that the convex-roof extension $F$  defined by Eq.~(\ref{eq:convexroof})  is also an ensemble monotone for all states.   
\end{proof}

%%%%%%%%%
 Note that     
$
F_\text{pure}(|\psi\rangle\langle\psi|):=f(\rho_\psi)
$ 
in Eq.~(\ref{eq:FfG}) is an ensemble monotone function of the pure state~$|\psi\rangle$ and not the mixed state~$\rho_\psi$ (Eq.~(\ref{eq:rhopsi})). Also the convex-roof extension~$F$ is no longer  the same function as $f$, although both $F$ and $f$ act on the whole state space $\mathcal{S}(\mathscr{H})$. In fact,  the function $f$ by itself is  not a monotone at all. For example,  no condition is set for the behavior of $f$ under general $U(1)$-invariant CP-maps. Furthermore, by Condition E2, $f$ is concave under mixtures of two or more states whereas Condition F3 of Definition~\ref{def:ensemble} states that a frameness monotone has to be a convex function.    
%%%%%%%%%%

We can now build the counterparts of Vidal's entanglement monotones for pure states~\cite{Vid00}.
Let 
\begin{equation}
	\bm{\lambda}^{\downarrow} (\rho_\psi)=\left(\lambda_1^{\downarrow}, \ldots, \lambda_{d}^{\downarrow}\right)
\end{equation}
be the vector obtained  by rearranging the coordinates of $\bm{\lambda} (\rho_\psi)$ in decreasing order.    
\begin{corollary}
\label{cor:U1family}
The family of pure-state functions
\begin{equation}
\label{Vmonotones}
	F_k:  \mathcal{P} (\mathscr{H}) \rightarrow \mathbb{R}^+:
		|\psi\rangle\langle\psi|\mapsto\sum_{i=k}^d \lambda_i^{\downarrow},\;
		k=2, \ldots, d
\end{equation}
together with their convex-roof extensions
are a family of  U(1)-frameness ensemble monotones. 
\end{corollary}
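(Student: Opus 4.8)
The plan is to apply Proposition~\ref{Vidalf} directly, so for each $k$ I need only produce a function $f_k:\mathcal{S}(\mathscr{H})\to\mathbb{R}^+$ satisfying Conditions E1 and E2 whose induced pure-state monotone $|\psi\rangle\langle\psi|\mapsto f_k(\mathcal{G}(|\psi\rangle\langle\psi|))$ reproduces $F_k$ of Eq.~(\ref{Vmonotones}). The natural candidate is
\begin{equation}
f_k(\rho):=\sum_{i=k}^d\lambda_i^\downarrow(\rho),
\end{equation}
the sum of the $d-k+1$ smallest eigenvalues of $\rho$. Since the eigenvalues of $\rho_\psi=\mathcal{G}(|\psi\rangle\langle\psi|)$ in Eq.~(\ref{eq:rhopsi}) are precisely the weights $\lambda_n$ of the standard form, the identification $f_k(\rho_\psi)=\sum_{i=k}^d\lambda_i^\downarrow(\rho_\psi)=F_k(|\psi\rangle\langle\psi|)$ is immediate.

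Condition~E1 is trivial: $f_k$ is a symmetric function of the spectrum of $\rho$ alone, and the spectrum is invariant under $\rho\mapsto U\rho U^\dagger$. I would also observe that the restriction $k\ge2$ is exactly what enforces the normalization F1 of Definition~\ref{def:ensemble-pure}: a $G$-invariant pure state has $\rho_\psi$ of rank one with eigenvalue vector $(1,0,\ldots,0)$, so $f_k$ vanishes on it precisely when $k\ge2$.

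The substantive step is Condition~E2, concavity, and here the key tool is the Ky Fan variational characterization of the sum of the smallest eigenvalues,
\begin{equation}
f_k(\rho)=\min_{\substack{P=P^\dagger=P^2\\ \text{rank}(P)=d-k+1}}\text{Tr}(P\rho),
\end{equation}
where the minimum runs over all orthogonal projectors of rank $d-k+1$. For each fixed $P$ the map $\rho\mapsto\text{Tr}(P\rho)$ is affine in $\rho$, and a pointwise minimum of affine functions is concave; hence $f_k$ is concave and E2 holds.

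With E1 and E2 verified, Proposition~\ref{Vidalf} certifies that each $F_k$ is a pure-state ensemble monotone under U(1)-invariant operations and that its convex-roof extension is an ensemble monotone for all states, which is the assertion of the corollary. I expect no real obstacle beyond correctly invoking the variational formula for the smallest-eigenvalue sums; once that standard spectral identity is in place, concavity---and with it the entire result---follows mechanically.
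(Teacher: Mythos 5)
Your proposal is correct and follows exactly the paper's route: the paper proves this corollary simply by asserting that the functions $f_k(\rho)=\sum_{i=k}^d\lambda_i^{\downarrow}(\rho)$ ``clearly satisfy both conditions'' of Proposition~\ref{Vidalf}. Your write-up merely makes explicit what the paper leaves implicit --- unitary invariance from spectral dependence, concavity via the Ky~Fan minimum principle over rank-$(d-k+1)$ projectors, and the observation that $k\ge 2$ guarantees vanishing on number eigenstates --- all of which is sound.
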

This family of functions clearly satisfies both conditions of Prop.~\ref{Vidalf}.
Consequently,  we see that the outcomes of any U(1)-invariant transformation majorize the initial state on average.  This  generalizes  what was already established for the case of deterministic transformations~\cite{GS08}.  

As another example, consider the entropy of the twirled state that is both concave and unitarily invariant.  
\begin{corollary}
\label{cor:Entropy}
The entropy of frameness 
\begin{align}
S_F(|\psi\rangle\langle \psi|):=-\rho_\psi \log_2 \rho_\psi
\end{align} 
is an ensemble monotone under a U(1)-SSR. 
\end{corollary}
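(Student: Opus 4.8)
The plan is to recognize this corollary as an immediate specialization of Proposition~\ref{Vidalf}, so that the whole task reduces to verifying that the relevant function $f$ satisfies the two hypotheses E1 and E2. Here I would take $f$ to be the von Neumann entropy $f(\rho)=-\mathrm{Tr}\left(\rho\log_2\rho\right)$, so that the entropy of frameness is exactly $S_F(|\psi\rangle\langle\psi|)=f(\rho_\psi)=f\left(\mathcal{G}(|\psi\rangle\langle\psi|)\right)$ in the notation of Eq.~(\ref{eq:FfG}). Once $f$ is identified this way, the corollary follows the moment E1 and E2 are checked.

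First I would verify Condition E1 (unitary invariance). The von Neumann entropy depends only on the eigenvalues of its argument, and conjugation by a unitary $U$ leaves the spectrum of $\rho$ unchanged; hence $f(U\rho U^{\dagger})=f(\rho)$ for all $U$, which is precisely E1. This step is essentially automatic once one notes the spectral definition of the entropy.

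Next I would establish Condition E2 (concavity). This is the standard concavity of the von Neumann entropy: for any $t\in[0,1]$ and states $\rho_1,\rho_2$ one has $S(t\rho_1+[1-t]\rho_2)\ge t S(\rho_1)+[1-t]S(\rho_2)$. I would either cite this as a well-known property or sketch it via Klein's inequality, equivalently the non-negativity of the quantum relative entropy, from which the concavity of $S$ follows directly. This is the only nontrivial input to the argument.

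With E1 and E2 in hand, Proposition~\ref{Vidalf} applies verbatim, giving that $F_\text{pure}(|\psi\rangle\langle\psi|)=f(\rho_\psi)$ is a pure-state ensemble monotone under U(1)-invariant operations and that its convex-roof extension is an ensemble monotone for all states. Because this is a direct application rather than a new construction, I do not anticipate a genuine obstacle; the only point demanding care is the concavity of the entropy, which is not elementary but is entirely standard, so the \emph{hard part} reduces to invoking the correct established inequality rather than deriving anything new.
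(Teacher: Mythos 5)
Your proposal is correct and matches the paper's own reasoning exactly: the paper treats this corollary as an immediate application of Proposition~\ref{Vidalf}, noting only that the von Neumann entropy of the twirled state is unitarily invariant (E1) and concave (E2). Your additional remark that the entropy should be read as $-\mathrm{Tr}\left(\rho_\psi\log_2\rho_\psi\right)$ is the right interpretation of the paper's (trace-omitting) notation.
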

The entropy of frameness is equal to the relative entropy of frameness (G-asymmetry) for pure states~\cite{GMS09}. 

Conditions 1 and 2 in Prop.~\ref{Vidalf} are only sufficient conditions for U(1)-monotones and not necessary ones,
not even for monotones defined over pure states. If $|\tilde{\psi}\rangle$ can be transformed to $|\tilde{\phi}^{(\alpha)}_\ell\rangle$ under general local operations,
it does not follow that $|\psi\rangle$ is necessarily transformable into $|\phi\rangle$ under a U(1)-SSR.
This reasoning follows simply because the local transformation that takes $|\tilde{\psi}\rangle$ to  $|\tilde{\phi}^{(\alpha)}_\ell\rangle$ need not have Kraus operators of the form $\hat{K}^{(\alpha)}_\ell\otimes \openone_R$
for~$\hat{K}^{(\alpha)}_\ell$ specified in Eq.~(\ref{U(1)Kraus}).
Thus, the frameness monotones, unlike entanglement monotones,   do not have to remain non-increasing on average for all local operations and therefore need not be of the form derived in Prop.~\ref{Vidalf}.  

As a counterexample, consider the normalized number variance 
\begin{equation}
\label{eq:Var}
	V_\text{pure} (|\psi\rangle \langle \psi| )=4\left(\langle\psi|\hat{n}^2|\psi \rangle- \left\langle\psi \right|\hat{n}\left| \psi \right>^2\right).  
\end{equation}  
The variance is neither concave nor convex, and yet  it was shown to be an ensemble monotone over pure states~\cite{SVC04, GS08}.  
For similar reasons, majorization is a necessary but not a sufficient condition for pure-state to pure-state deterministic transformations.
Thus,  the  U(1)-frameness monotones of Eq.~(\ref{Vmonotones}), unlike Vidal's monotones in entanglement theory,  do not  fully characterize deterministic U(1)-invariant transformations~\cite{Vid00}.  

%*************************************************************************************************************
Motivated by Wootters's formula for the  concurrence of bipartite two-qubit states~\cite{Woo98}, later extended to bipartite qudit states~\cite{RBC+01, Gou05a, Gou05b}, we can also construct a family of concurrence measures for qudits with $d\ge 2$.       
Let  
\begin{equation}
\label{eq:sk}
	S_k( \bm{\lambda}(\rho_\psi))\equiv \sum_{m_1< m_2<\cdots< m_k} \lambda_{m_1} \lambda_{m_2}\ldots\lambda_{m_k},
\end{equation}
for $k=2,\ldots,d$,  
denote the~$k^{\text{th}}$ elementary symmetric function  of the eigenvalues $\bm{\lambda}(\rho_\psi)=\left(\lambda_1, \ldots, \lambda_d\right) $ of $\rho_\psi$.   
We assume that  $\lambda_n\equiv 0$ for $n_\text{max}(|\psi\rangle)<n\le d$.   
\begin{definition}
The family of concurrence-of-frameness functions are defined for  pure states  as   
\begin{align}
\label{eq:fk}
	C_k\left(\left|\psi\right\rangle\langle\psi|\right)
		:& \mathcal{P} (\mathscr{H}) \rightarrow \mathbb{R}^+\nonumber\\
		:&\left|\psi\right\rangle\left\langle\psi\right|\rightarrow
		f_k(\rho_\psi)
				:=\left[\frac{S_k\left(\bm{\lambda}(\rho_\psi)\right)}
					{S_k\left(\frac{1}{d},\ldots,\frac{1}{d}\right)}\right]^{\frac{1}{k}}
\end{align}
and extended to mixed states via their convex-roof extensions. 
\label{cf}
\end{definition}
The~$f_k$ are concave functions of $\bm{\lambda}(\rho)$~\cite{Gou06}.
Hence Prop.~\ref{Vidalf} guarantees that $\{C_k\}$ are ensemble monotones as summarized in the following corollary. 
\begin{corollary}
The concurrence~$C_k(\rho)$ for $k=2, \ldots, d$  of a state~$\rho$  does not increase on average under U$(1)$-invariant operations.    
\label{cf1}
\end{corollary}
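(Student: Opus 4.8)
The plan is to derive the corollary as a direct application of Proposition~\ref{Vidalf} to the single choice $f=f_k$, where $f_k(\rho)=[S_k(\bm{\lambda}(\rho))/S_k(\tfrac1d,\ldots,\tfrac1d)]^{1/k}$ is the function of Eq.~(\ref{eq:fk}). Since the proposition already supplies both the pure-state monotonicity of $F_\text{pure}=f_k\circ\mathcal{G}$ and, through Lemma~\ref{lemma:lm2}, the ensemble monotonicity of its convex-roof extension on all of $\mathcal{S}(\mathscr{H})$, the whole task collapses to verifying that $f_k$ meets Conditions~E1 and~E2. Condition~E1 is essentially free: $f_k$ sees $\rho$ only through the elementary symmetric function $S_k$ of its eigenvalue vector $\bm{\lambda}(\rho)$, the eigenvalues are invariant under $\rho\mapsto U\rho U^{\dagger}$, and the normalisation $S_k(\tfrac1d,\ldots,\tfrac1d)$ is a positive constant, so unitary invariance is immediate.

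The substance of the argument, and where I expect the main obstacle, is Condition~E2, the concavity of $f_k$. I would carry this out in two steps. First, I would establish that the symmetric map $\bm{\lambda}\mapsto S_k(\bm{\lambda})^{1/k}$ is concave on the non-negative orthant; this is the classical fact that the $k$-th root of the $k$-th elementary symmetric polynomial is concave, which one may import from~\cite{Gou06} or obtain from G{\aa}rding's theorem on hyperbolic polynomials (each $S_k$ is hyperbolic of degree $k$ with the positive orthant contained in its hyperbolicity cone, so $S_k^{1/k}$ is concave there). Dividing by the positive constant $S_k(\tfrac1d,\ldots,\tfrac1d)$ preserves concavity, so $g_k:=f_k$ viewed on eigenvalue vectors is concave. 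Second, I would upgrade this vector-concavity to the matrix-concavity that E2 literally demands, namely $f_k(t\rho_1+[1-t]\rho_2)\ge t f_k(\rho_1)+[1-t]f_k(\rho_2)$ for arbitrary density operators. Because $S_k^{1/k}$ is symmetric in its arguments, Davis's theorem on spectral functions—a symmetric function of the eigenvalues induces a concave matrix function exactly when the underlying vector function is concave—delivers this at once. The passage from eigenvalue-concavity to matrix-concavity is the only genuinely delicate point; notably it can be bypassed in the present setting, since the reduced reference states $\tau_x$ appearing in the proof of Proposition~\ref{Vidalf} are all diagonal in the same number basis of~R and therefore commute, so that $\bm{\lambda}(\sum_x p_x\tau_x)=\sum_x p_x\bm{\lambda}(\tau_x)$ index-by-index and the required inequality $f_k(\sum_x p_x\tau_x)\ge\sum_x p_x f_k(\tau_x)$ follows directly from the vector-concavity of $S_k^{1/k}$.

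With E1 and E2 verified, Proposition~\ref{Vidalf} immediately yields that each $C_k=f_k\circ\mathcal{G}$ is a pure-state ensemble monotone under U(1)-invariant operations, and that its convex-roof extension is an ensemble monotone for all states. This last conclusion is precisely the statement that $C_k(\rho)$ does not increase on average under U(1)-invariant operations for every $k=2,\ldots,d$, which completes the proof. The only quantitative input beyond the general machinery already assembled is the concavity of $S_k^{1/k}$, so essentially all of the work is concentrated in that single classical inequality together with its matrix lift.
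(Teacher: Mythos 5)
Your proposal is correct and takes essentially the same route as the paper, whose entire proof is to note that the $f_k$ are concave functions of $\bm{\lambda}(\rho)$ (citing~\cite{Gou06}) and then invoke Proposition~\ref{Vidalf}. Your extra care in lifting vector-concavity of $S_k^{1/k}$ to concavity on density operators---via Davis's spectral-function theorem, or bypassed entirely by observing that the reduced reference states $\tau_x$ in the proof of Proposition~\ref{Vidalf} commute---merely fills in a detail the paper leaves implicit.
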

Note that non-resource states are number eigenstates hence not of full rank.
Thus, their concurrence is identically zero as is expected  from Condition F1 in Def.~\ref{def:ensemble}.    

%*************************************************************************************************************

Now we demonstrate the similarity between the entanglement and frameness resource measures by calculating the concurrence of mixed qubit states.
For a pure single-qubit state ($d=2$), 
$C_2(|\psi\rangle)=|\langle\psi|X|\psi^{*}\rangle|$
for  complex conjugation in the basis $\left\{|0\rangle,|1\rangle\right\}$
and~$X=|0\rangle\langle1|+|1\rangle\langle0|$ the flip operator in this basis.
Let  
\begin{equation}
\label{R}
	R:=\sqrt{\sqrt{\rho}\tilde{\rho}\sqrt{\rho}},\;
	\tilde{\rho}:= X\rho^{*}X,
\end{equation}
and let the set of eigenvalues of $R$ be 
$\mu(R)=\left\{\mu_1, \mu_2\right\}$.
In Appendix~\ref{sec:Ap1}, we derive the explicit dependence of $\mu_1$ and~$\mu_2$  on the parameters of the spectral decomposition of~$\rho$.

\begin{proposition}
The concurrence of frameness for a qubit state~$\rho$ is 
\begin{equation}
	C_2\left(\rho\right)=| \mu_1-\mu_2|.  \label{lambda}
\end{equation}
\end{proposition}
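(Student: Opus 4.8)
The plan is to observe that, on a single qubit, the pure-state concurrence of frameness $C_2(|\psi\rangle)=|\langle\psi|X|\psi^{*}\rangle|$ is formally identical to the two-qubit entanglement concurrence, so that computing its convex-roof extension is exactly the optimization solved by Wootters~\cite{Woo98}, now restricted to $\mathscr{H}\cong\mathbb{C}^2$. First I would pass from normalized decompositions $\rho=\sum_i p_i|\psi_i\rangle\langle\psi_i|$ to subnormalized vectors $|v_i\rangle:=\sqrt{p_i}|\psi_i\rangle$, so that $\rho=\sum_i|v_i\rangle\langle v_i|$ and the quantity to be minimized becomes $\sum_i p_i C_2(|\psi_i\rangle)=\sum_i|\langle v_i|X|v_i^{*}\rangle|$, i.e.\ the sum of the moduli of the diagonal entries of the complex symmetric matrix $\tau_{ij}:=\langle v_i|X|v_j^{*}\rangle$.

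Next I would use the standard fact that any two subnormalized decompositions of the same $\rho$ are related by a rectangular isometry $U$, under which $\tau\mapsto U\tau U^{T}$; the transpose rather than the adjoint appears because of the complex conjugation in $|v_j^{*}\rangle$. A Takagi factorization of $\tau$ then shows that its singular values are precisely the eigenvalues $\mu_1,\mu_2$ of $R=\sqrt{\sqrt{\rho}\,\tilde\rho\,\sqrt{\rho}}$, because these singular values are the square roots of the eigenvalues of $\rho\tilde\rho$, which are the $\mu_i^{2}$. This maps the frameness problem completely onto the matrix-analytic setting of Wootters's argument.

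The lower bound $C_2(\rho)\ge|\mu_1-\mu_2|$ would follow from Wootters's phase-optimization argument applied to the diagonal entries of $U\tau U^{T}$, while the harder part---where I expect the main obstacle---is achievability: constructing an explicit optimal decomposition in which every member carries the same concurrence $|\mu_1-\mu_2|$, so that the average saturates the bound. Here the qubit case is genuinely simpler than the generic two-qubit case, because $\mathscr{H}\cong\mathbb{C}^2$ forces $\rho\tilde\rho$ to be $2\times2$ and hence $R$ to have only the two eigenvalues $\mu_1,\mu_2$. Wootters's general expression $\max\{0,\mu_1-\mu_2-\mu_3-\mu_4\}$ therefore collapses to $\max\{0,\mu_1-\mu_2\}$, which, being manifestly symmetric under relabelling $\mu_1\leftrightarrow\mu_2$, equals $|\mu_1-\mu_2|$ with the truncation at zero never binding.

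Finally I would check consistency on pure states: for $|\psi\rangle=a|0\rangle+b|1\rangle$ one finds $\langle\psi|X|\psi^{*}\rangle=2a^{*}b^{*}$, so $C_2(|\psi\rangle)=2|ab|=2\sqrt{\lambda_1\lambda_2}$, in agreement with $f_2(\rho_\psi)$ of Definition~\ref{cf}. The explicit parametrization of $\mu_1$ and $\mu_2$ in terms of the spectral data of $\rho$, deferred to Appendix~\ref{sec:Ap1}, then completes the identification and turns the formula into a closed expression in the eigenvalues and eigenvectors of $\rho$.
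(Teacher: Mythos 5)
Your proposal is correct and follows essentially the same route as the paper: the paper's proof is likewise Wootters's argument transplanted to the antiunitary involution $\psi\mapsto X|\psi^{*}\rangle$, using the symmetric matrix $\tau_{ij}$, its diagonalization (the paper's $\tau'_{11}=\mu_1$, $\tau'_{22}=-\mu_2$ is exactly your Takagi step), the equal-concurrence decomposition for achievability, and the phase/unitary-mixing argument for the lower bound. The only cosmetic difference is that the paper writes out the preconcurrence-plus-continuity construction explicitly rather than citing the collapse of Wootters's formula $\max\{0,\mu_1-\mu_2-\mu_3-\mu_4\}$ to the rank-$2$ case, which is legitimate since, as you note, the mapping onto Wootters's setting is exact.
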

\begin{proof}
The proof is similar to Wootters's proof for concurrence of entanglement~\cite{Woo98}.
Without loss of generality  we assume that $\mu_1\geq\mu_2$.
Let
\begin{equation}
	\rho=  |\phi_1\rangle\langle\phi_1| + |\phi_2\rangle\langle\phi_2| 
\end{equation}
be the spectral decomposition of~$\rho$, where each state~$|\phi_i\rangle$ is unnormalized.
Also  let  
$\tau_{ij}:= \langle \phi_i|\tilde{\phi}_j\rangle$,
which yields the symmetric relation $\tau_{ij}=\tau_{ji}$~\cite{Woo98}.
Then
 \begin{equation}
	 R^2=\sum_{i,j} (\tau\tau^{*})_{ij} |\phi_i\rangle \langle \phi_j|, 
 \end{equation}
and the unitary operator $U$ that diagonalizes $R^2$ relates the spectral decomposition of~$\rho$ to a different  decomposition 
 $\rho= |\xi_1\rangle\langle \xi_1|+ |\xi_2\rangle\langle\xi_2|$   
  for which the corresponding matrix $\tau'_{ij}:=\langle\xi_i|\tilde{\xi}_j\rangle$ is diagonal,  and 
\begin{equation}
	\tau'_{11}= \mu_1,\; \tau'_{22}= -\mu_2\;,
	\tau'_{12}=\tau'_{21}=0.
\end{equation}

The average concurrence of the ensemble $\left\{|\xi_1\rangle, |\xi_2\rangle\right\}$ that realizes~$\rho$  is 
\begin{equation}
	\langle C\rangle=\sum_{i=1}^2 |\tau'_{ii}|=\mu_1+\mu_2, 
\end{equation}
and we define the average preconcurrence of this ensemble as  
\begin{equation}
	\langle\tilde{C}\rangle:=\sum_{i=1}^2 \tau'_{ii} = \mu_1-\mu_2. \label{aveprecon}
\end{equation}
If the concurrence of the states~$|\xi_i\rangle$ are not equal, we can always interchange them by an orthogonal transformation.
Due to continuity, there must also be an intermediary orthogonal transformation $V$ that takes~$|\xi_i\rangle$ to states   
$|\zeta_i \rangle$
with the following property: $C_2\left(|\zeta_1\rangle \langle \zeta_1| \right)=C_2\left( |\zeta_2\rangle\langle \zeta_2|  \right)=\langle\tilde{C}\rangle$.   
Hence, the average concurrence of the new decomposition also equals the preconcurrence, 
\begin{equation}
	\langle C\rangle=\langle\tilde{C}\rangle=\mu_1-\mu_2.
\end{equation}

For any other decomposition of~$\rho$ attained by  the unitary operator  $V'$,
let $v_{ij}:=V'^2_{ij}$ so that $\sum_i |v_{ij}|=1$. The average concurrence is equal to   
\begin{align}
\langle C\rangle&=\sum_i \left| \sum_{j}  v_{ij} \tau'_{jj} \right|\ge \left|\mu_1-\sum_i v_{i2} \mu_2 \right|\nonumber\\
&\ge \mu_1-\left|\sum_i v_{i2} \mu_2\right|\ge\;\mu_1-\mu_2,
\end{align}
where we assume~$v_{i1}$ are real, by a suitable change of the overall phase if necessary, so that $\sum_i v_{i1}=1$. Thus, the average  concurrence of the ensemble~$\left\{|\zeta_1\rangle, |\zeta_2\rangle\right\}$ is the minimum average concurrence. 
\end{proof}
  
As for entanglement, we can use the closed form of the average concurrence to calculate the convex-roof extension of a set of other frameness monotones. The following corollary specifies what type of monotones belong to this set.  
\begin{corollary}
\label{cor:Cor8}
If a pure-state frameness measure~$F_\text{pure}$  is a non-decreasing convex  linear functional
$\mathcal{F}(C_2)$,  then $F(\rho)=\mathcal{F}(C_2(\rho))$ for all $\rho\in\mathcal{S}(\mathscr{H})$
with~$F$ the convex-roof extension of $F_\text{pure}$. 
\end{corollary}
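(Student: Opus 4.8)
The plan is to establish the two matching inequalities $F(\rho)\le\mathcal{F}(C_2(\rho))$ and $F(\rho)\ge\mathcal{F}(C_2(\rho))$, following the same strategy Wootters used to lift concurrence to entanglement of formation. Throughout I read the hypothesis as $F_\text{pure}(|\psi\rangle\langle\psi|)=\mathcal{F}(C_2(|\psi\rangle\langle\psi|))$ for a single-variable function $\mathcal{F}$ that is non-decreasing and convex on the range of $C_2$, and I write $C_2(\rho)$ for the convex-roof concurrence of the mixed state, which the preceding proposition has already evaluated in closed form as $\mu_1-\mu_2$.

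First I would prove the upper bound. The decisive input is the special decomposition $\{|\zeta_i\rangle\}$ built in the proof of the previous proposition, which realizes $\rho$ and has the property that \emph{every} pure state in it carries the same concurrence, $C_2(|\zeta_i\rangle\langle\zeta_i|)=C_2(\rho)$. Evaluating the convex-roof functional on this particular decomposition, with appropriate weights $p_i$, and using that $\mathcal{F}$ then takes a common value on all its members, yields
\begin{align}
F(\rho)\le\sum_i p_i\,F_\text{pure}\left(|\zeta_i\rangle\langle\zeta_i|\right)
=\sum_i p_i\,\mathcal{F}\left(C_2(\rho)\right)=\mathcal{F}\left(C_2(\rho)\right),
\end{align}
since the minimum defining the convex roof can only be smaller than this one average.

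For the lower bound I would take an \emph{arbitrary} decomposition $\{|\psi_i\rangle,q_i\}$ of $\rho$ and apply Jensen's inequality, using convexity of $\mathcal{F}$, to pull it outside the average,
\begin{align}
\sum_i q_i\,F_\text{pure}\left(|\psi_i\rangle\langle\psi_i|\right)
=\sum_i q_i\,\mathcal{F}\left(C_2(|\psi_i\rangle\langle\psi_i|)\right)
\ge\mathcal{F}\left(\sum_i q_i\,C_2(|\psi_i\rangle\langle\psi_i|)\right).
\end{align}
Because $C_2(\rho)$ is by definition the minimum of the average concurrence over all decompositions, the argument satisfies $\sum_i q_i\,C_2(|\psi_i\rangle\langle\psi_i|)\ge C_2(\rho)$, and since $\mathcal{F}$ is non-decreasing this gives
\begin{align}
\mathcal{F}\left(\sum_i q_i\,C_2(|\psi_i\rangle\langle\psi_i|)\right)\ge\mathcal{F}\left(C_2(\rho)\right).
\end{align}
As this holds for every decomposition, taking the infimum on the left yields $F(\rho)\ge\mathcal{F}(C_2(\rho))$, and combining the two bounds closes the argument.

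The step I expect to carry the real weight is the upper bound, because it rests entirely on the existence of an optimal decomposition whose members \emph{all} share the single concurrence value $C_2(\rho)$; this equal-concurrence decomposition is exactly what the preceding proposition supplies through the orthogonal transformation $V$ taking the $|\xi_i\rangle$ to the $|\zeta_i\rangle$. Without it the convex roof of $F_\text{pure}$ could in principle exceed $\mathcal{F}(C_2(\rho))$, and the collapse of the roof to a function of $C_2$ alone would fail. The lower bound, by contrast, is a routine consequence of the convexity and monotonicity of $\mathcal{F}$ together with the defining minimization for $C_2(\rho)$, and requires no further structural input.
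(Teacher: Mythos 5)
Your proof is correct and follows essentially the same route as the paper: both arguments rest on the equal-concurrence optimal decomposition $\{|\zeta_i\rangle\}$ from the preceding proposition to show the convex roof is attained there, and both use monotonicity of $\mathcal{F}$ together with Jensen's inequality (convexity) to show every other decomposition gives an average at least $\mathcal{F}(C_2(\rho))$. The only difference is presentational: you split the argument into explicit upper and lower bounds, while the paper chains the two inequalities in a single sequence.
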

\begin{proof}
Recall that there exists a decomposition with the minimum average concurrence where    
$C_2(\rho)=C_2(|\zeta_1\rangle\langle \zeta_1|)=C_2(|\zeta_2\rangle\langle \zeta_2|)$.
Thus, for any other decomposition of $\rho=\sum_{j} q'_{j} |\psi_{j}\rangle \langle\psi_{j}|$ we must  have~$C_2(|\zeta_1\rangle \langle \zeta_1|) \le \sum_{j} q'_{j} C_2(|\psi_{j}\rangle \langle \psi_{j}|)$.    
As~$\mathcal{F}(C_2)$ is non-decreasing and convex, we have  
\begin{align}
\mathcal{F}\left( C_2(|\zeta_1\rangle\langle \zeta_1|)\right)&\le\mathcal{F}\left( \sum_{j} q'_{j} \:C_2(|\psi_{j}\rangle \langle \psi_{j}|)\right)\nonumber\\
&\le \sum_{j} q'_{j} \:\mathcal{F}\left(C_2\left(|\psi_{j}\rangle \langle \psi_{j}|\right)\right).   
\end{align}
Thus $\mathcal{F}\left(C_2(\rho)\right)=\mathcal{F}\left(C_2(|\zeta_1\rangle\langle \zeta_1|)\right)$ is equal to the average frameness of~$\rho$ minimized over all its decompositions. 
\end{proof}
In the next section we use this corollary to calculate the convex-roof extension of the variance~(\ref{eq:Var}),
which is the asymptotic measure  of U(1)-frameness.
Finally, we note that this corollary can also be used for the convex-roof extension of the asymptotic $\mathbb{Z}_2$-frameness measure~\cite{GS08}.

%*************************************************************************************************************
\section{Frameness of Formation}
\label{sec:FoF}

Proposition~\ref{Vidalf} enables us to systematically construct frameness monotones under U(1)-SSR, but as we noted earlier, not all U(1)-frameness monotones can be obtained this way.
Yet, there is still a chance that the convex-roof extension of  monotones that cannot be obtained by the method of Proposition~\ref{Vidalf} may be expressed directly as a function of monotones that do.  In particular,  Corollary~\ref{cor:Cor8} specifies which monotones can be related in this way to the concurrence of frameness for the case of single qubit states. 

The number variance~(\ref{eq:Var}) is an important frameness monotone that does not meet the conditions of Proposition~\ref{Vidalf},~i.\ e.\ it  is not a concave function of the twirled state.          
Besides being an ensemble monotone, variance is the unique measure of frameness of pure states
in the sense that it quantifies the rate at which they can be asymptotically formed from or distilled into the state $|+\rangle:=\left(|0\rangle+|1\rangle\right)/\sqrt{2}$~\cite{GS08}. The  $|+\rangle$ state is chosen as a standard unit resource state and is an instance of a unipartite, or local, refbit~\cite{Enk05, GS08}.  Thus, the convex-roof extension of the variance is  the equivalent of the entanglement of formation~\cite{BDSW96} and is therefore called the frameness of formation (FoF) of the group U(1)~\cite{GS08}. 
\begin{definition}
The {\it frameness of formation} for the group  $G$=U(1) of a state~$\rho$ in terms of  refbits $|+\rangle$  is 
\begin{equation}
V(\rho)\equiv\min
_{\left\{|\psi_i\rangle, q_i\right\}} \sum_i q_i \; V_\text{pure}(|\psi_i\rangle\langle \psi_i|),\label{FF}
\end{equation}
where $V_\text{pure}(|\psi_i\rangle\langle \psi_i|)=4\left(\langle\psi_i|\hat{n}^2|\psi_i \rangle- \left\langle\psi_i \right|\hat{n}\left| \psi_i \right>^2\right)$. 
\end{definition}
\noindent
As we presently show, the variance of a qubit is a convex function of the concurrence,  and we can determine the FoF  of a qubit analytically by relating the variance to the qubit's concurrence of frameness using Corollary~\ref{cor:Cor8}.     
The outcome is analogous to Wootters's formula for the entanglement of formation of bipartite two-qubit states~\cite{Woo98}. 
\begin{proposition}
The FoF of a single qubit is 
\begin{align}
\label{eq:VarFor}
V(\rho)=|\mu_1-\mu_2|^2
\end{align}
for~$\mu_R=\left\{\mu_1, \mu_2\right\}$ the set of eigenvalues for state~$R= \sqrt{\sqrt{\rho}\tilde{\rho}\sqrt{\rho}}$.
\end{proposition}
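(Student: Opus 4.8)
The plan is to reduce the claim to the already-established qubit concurrence formula $C_2(\rho)=|\mu_1-\mu_2|$ of Eq.~(\ref{lambda}) by expressing the pure-state variance of Eq.~(\ref{eq:Var}) as an explicit function $\mathcal{F}(C_2)$ of the pure-state concurrence of Definition~\ref{cf}, and then invoking Corollary~\ref{cor:Cor8}. Concretely, I expect to show that on pure single-qubit states one has $V_\text{pure}=C_2^2$, so that $\mathcal{F}(x)=x^2$, and that this $\mathcal{F}$ satisfies the hypotheses of Corollary~\ref{cor:Cor8} (non-decreasing and convex).

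First I would bring an arbitrary pure qubit to the standard form $|\psi\rangle=\sqrt{\lambda_0}\,|0\rangle+\sqrt{\lambda_1}\,|1\rangle$, with $\lambda_0+\lambda_1=1$ and number eigenvalues $0$ and $1$ on $|0\rangle$ and $|1\rangle$. A direct evaluation of $\langle\psi|\hat n|\psi\rangle=\lambda_1$ and $\langle\psi|\hat n^2|\psi\rangle=\lambda_1$ then gives $V_\text{pure}(|\psi\rangle\langle\psi|)=4(\lambda_1-\lambda_1^2)=4\lambda_0\lambda_1$. Next I would compute the concurrence from Definition~\ref{cf}: the twirled state is $\rho_\psi=\lambda_0|0\rangle\langle0|+\lambda_1|1\rangle\langle1|$, for which the second elementary symmetric function is $S_2(\bm{\lambda}(\rho_\psi))=\lambda_0\lambda_1$ while $S_2(\tfrac12,\tfrac12)=\tfrac14$, so that $C_2(|\psi\rangle\langle\psi|)=[4\lambda_0\lambda_1]^{1/2}=2\sqrt{\lambda_0\lambda_1}$. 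Comparing the two expressions yields the pointwise identity $V_\text{pure}=C_2^2$ on every pure qubit.

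With $\mathcal{F}(x)=x^2$ identified, I would verify its hypotheses for Corollary~\ref{cor:Cor8}: because the concurrence obeys $C_2\ge0$, the map $x\mapsto x^2$ is non-decreasing on the range $[0,1]$ of $C_2$, and it is manifestly convex. Corollary~\ref{cor:Cor8} then gives $V(\rho)=\mathcal{F}(C_2(\rho))=C_2(\rho)^2$, and substituting Eq.~(\ref{lambda}) produces $V(\rho)=|\mu_1-\mu_2|^2$, which is the assertion.

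There is no serious analytic difficulty here; the real content is that the convex-roof minimisation collapses onto the concurrence through Corollary~\ref{cor:Cor8}. The point that warrants the most care is confirming that $V_\text{pure}=C_2^2$ is a genuine pointwise functional relation rather than merely two monotone reparametrisations of $\lambda_0$ --- this is precisely what licenses the use of Corollary~\ref{cor:Cor8}, and it holds because for $d=2$ both quantities are functions of the single spectral invariant $\lambda_0\lambda_1$. A secondary check is that the number-level convention enters only through an overall constant: with spacing $n_1-n_0$ one finds $V_\text{pure}=4\lambda_0\lambda_1(n_1-n_0)^2$, so the refbit convention $n_1-n_0=1$ is what makes $\mathcal{F}(x)=x^2$ exact, while any other spacing rescales $\mathcal{F}$ without affecting the structure of the argument or the final convex-roof evaluation.
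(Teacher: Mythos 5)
Your proposal is correct and follows essentially the same route as the paper's own proof: compute $V_\text{pure}=4\lambda_0\lambda_1$ and $C_2=2\sqrt{\lambda_0\lambda_1}$ on standard-form pure qubits, identify $\mathcal{F}(x)=x^2$ as convex and non-decreasing, and invoke Corollary~\ref{cor:Cor8} together with Eq.~(\ref{lambda}). Your explicit verification of the pointwise relation $V_\text{pure}=C_2^2$ and of the hypotheses of the corollary simply fills in details the paper leaves implicit.
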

\begin{proof}
Recall that 
$|\psi\rangle=\sqrt{\lambda_{0}}|0\rangle+\sqrt{\lambda_1}|1\rangle$.
We have 
$C_2(|\psi\rangle\sl \langle \psi|)=2\sqrt{\lambda_0\lambda_1}$ 
 and  
$V_\text{pure} (|\psi\rangle\langle \psi|)=4\lambda_{0}\lambda_1$. 
The frameness~(\ref{eq:Var}) is a convex, non-decreasing function of the concurrence,  
$\mathcal{V}(C_2)=C_2^2$.
The result follows from Corollary \ref{cor:Cor8}.
\end{proof}

Schuch et.\ al.\  already identified the equivalent of the number variance in a bipartite setting  as a separate measure of non-locality under the joint restrictions of LOCC and total-number SSR, where they call the bipartite measure the ``superselection-induced variance" and also show, among other things, that its convex-roof extension can be obtained from the bipartite entanglement concurrence~\cite{SVC04}.  

The SSR-induced  variance and U(1)-fameness of formation are related, and the arguments in section~\ref{qudit-mono} that relate frameness resources to entanglement through purification of the twirled state makes the link between the two measures even more explicit. However, although we employ bipartite states for purification, our aim is not to study nonlocal frameness.
Rather,  the resources we consider are unipartite and are not restricted by this SSR. Only the operations have to obey the SSR. The monotones and measures of frameness  we consider, including the concurrence of frameness and the variance,  are viewed as local RF resources and are treated on their own, independent of entanglement theory.   

Strictly speaking, the variance quantifies the rate of formation for states $|\psi\rangle=\sum_n \lambda_n |n\rangle$ whose number spectrum, $\text{spec}(|\psi\rangle)$~(\ref{Spec}), is gapless,~i.\ e.\ states for which $\lambda_{n_1}>0$ and~$\lambda_{n_2}>0$ implies that $\lambda_n>0$ for all $n$ between $n_1$ and~$n_2$. The reason is that states with gaps cannot be transformed to gapless states  with non-zero probability under U(1)-SSR. However, the problem can be solved by employing  negligible amount of catalyst resources that makes it possible to asymptotically transform gapless and gapped states to each other in a reversible manner~\cite{SVC04,GS08}. 
Here, we have assumed all pure states are mutually interconvertible. Under this assumption, we can consistently interpret the convex-roof extension of the variance as the minimum average cost, in terms of refbits,  of preparing the ensemble of states that realize the mixed state.
%*************************************************************************************************************
    
 \section{Conclusions}

We have developed a framework for studying frameness measures and studied in detail the
special case that the lack of reference frame information corresponds to ignorance about a phase reference,
i.e.\ a U(1) invariance.
Our strategy has been to adapt existing entanglement monotones for mixed states, 
which use the convex-roof extension,
to the case of frameness monotones.
The concurrence frameness illustrates this point well: concurrence of entanglement
is straightforwardly adapted to a frameness concurrence monotone using a convex-roof extension of the pure-state case.
We also introduced a Ôframeness of formationÕ monotone.
This monotone quantifies the number of refbits required to construct a resource state,
analogously to the case of entanglement whereby ebits are consumed to construct a resource states.
We show that the resultant frameness of formation is indeed a proper monotone.

In the restricted U(1)-invariance case considered here, we have seen that,
despite the great difference between superselection rules and local operations under classical communication,
with the former related to frameness and the latter to entanglement,
  every pure state bipartite entanglement resource has an analogous frameness resource.

This connection is explicitly clear in our strategy of purifying the twirled state, which results in a bipartite system.     
Whether  this connection between frameness and entanglement holds  beyond U(1) invariance,
and specifically for non-abelian group invariance, and whether other entanglement monotones that are not convex-roof extensions  are similarly related to frameness monotones  are interesting questions that remain open.    
Our work on U(1) invariance establishes a foundation for studies of quantum frameness for mixed-state resources
and suggests deeper connections between entanglement and frameness measures.

%*************************************************************************************************************
\begin{acknowledgments}
We appreciate valuable discussions with  I.\ Marvian, Y.\ R.\ Sanders,  M.\ Skotiniotis, and R.\ W.\ Spekkens.  
This research has been supported by Alberta Innovates,   
the Natural Sciences and Engineering Research Council,
General Dynamics Canada,
and the Canadian Centre of Excellence for Mathematics of Information Technology and Complex Systems (MITACS).
BCS is partially supported by a Canadian Institute for Advanced Research Fellowship.
\end{acknowledgments}
\appendix
\section{Explicit concurrence of frameness and U(1)-frameness of formation of a qubit}
\label{sec:Ap1}
Let $\rho=p|\phi_1\rangle\langle\phi_1|+(1-p)|\phi_2\rangle\langle\phi_2|$
be the spectral decomposition of the state~$\rho$. The two states $|\phi_1\rangle$ and $|\phi_2\rangle$ have the same relative phase and can be simultaneously transformed by U(1)-invariant transformations to states with real amplitudes on the Bloch sphere:   
\begin{align}
	|\phi_1\rangle=&\cos\frac{\alpha}{2}|0\rangle+\sin\frac{\alpha}{2} |1\rangle,\nonumber\\
	|\phi_2\rangle=&-\sin\frac{\alpha}{2}|0\rangle+\cos\frac{\alpha}{2} |1\rangle.
\end{align}
The two singular values of the state~$R(\rho)$ in Eq.~(\ref{R}) are 
\begin{align}
	\mu_{1,2}=\sqrt{p(1-p)+\frac{1}{2}\left(1-2p\right)^{2}\sin^2\alpha \pm K}
\end{align}
for
\begin{align}
K:=\frac{1}{2} \big|\left(1-2p\right)\sin\alpha \big| \sqrt{\left(1-2p\right)^2\sin^2\alpha +4p(1-p)}. 
\end{align}
The state's concurrence is equal to 
\begin{align}
C_2(\rho)=\left|(1-2p) \sin \alpha\right|,  
\end{align}
%Note that  the preconcurrence of the two states in the spectral decomposition are $\tilde{C}_2(|\phi_1\rangle)=\sin \alpha$ and $\tilde{C}_2(|\phi_2\rangle)=-\sin \alpha$, so that the absolute value of their average preconcurrence is already equal to the minimum concurrence of the qubit.  
and  
\begin{align} 
 V_2(\rho)&=(1-2p)^2 \sin^2 \alpha  
 \end{align}
is the qubit U(1)-frameness of formation.

\end{document}